\documentclass[11pt]{scrartcl}

\usepackage[a4paper, left=2.54cm, right=2.54cm, top=2.54cm, bottom=3.54cm]{geometry}

\usepackage{authblk} 

\usepackage[ngerman, UKenglish]{babel} 
\usepackage[utf8]{inputenc} 
\usepackage[T1]{fontenc}
\usepackage{comment}
\usepackage{verbatim} 

\usepackage[UKenglish]{isodate}

\usepackage{lmodern} 

\usepackage[pdfpagelabels, hidelinks]{hyperref} 
\usepackage{enumerate} 
\usepackage[onehalfspacing]{setspace}

\usepackage{natbib}

\usepackage{amsmath, amsthm, amssymb,amsfonts} 
\usepackage{dsfont} 
\usepackage{mathtools,thmtools}

\usepackage[table]{xcolor}
\usepackage{diagbox}

\allowdisplaybreaks

\newcommand\blfootnote[1]{
	\begingroup
	\renewcommand\thefootnote{}\footnote{#1}
	\addtocounter{footnote}{-1}
	\endgroup
}

\usepackage{xcolor} 

\usepackage{orcidlink}

\theoremstyle{plain} 
\newtheorem{theorem}{Theorem}[section]
\newtheorem{lemma}[theorem]{Lemma}

\newtheorem*{lemma*}{Lemma}
\newtheorem*{theorem*}{Theorem}
\newtheorem*{corollary*}{Corollary}
\newtheorem*{claim*}{Claim}

\theoremstyle{definition}
\newtheorem{definition}[theorem]{Definition}
\newtheorem{notation}[theorem]{Notation}
\newtheorem{remark-notation}[theorem]{Remark and Notation}
\newtheorem{example}[theorem]{Example}
\newtheorem{construction}[theorem]{Construction}
\newtheorem{remark}[theorem]{Remark}

\AtEndEnvironment{rem}{\hfill\ensuremath{\diamond}}

\newtheorem*{definition*}{Definition}
\newtheorem*{example*}{Example}
\newtheorem*{remark*}{Remark}

\theoremstyle{definition} 
\newtheorem{open-question}{Question}[section]

\numberwithin{equation}{section}

\title{Contributions to the hierarchy of\\probabilistic languages}

\author[i,ii]{Lothar Sebastian Krapp \orcidlink{0000-0003-3102-1923}\,}
\author[i] {Remo Nitschke \orcidlink{0000-0002-3660-1006}\,}
\affil[i]{\,Institut für Interdisziplinäre Sprachevolutionswissenschaft, Universität Zürich, Switzerland}
\affil[ii]{\,Fachbereich Mathematik und Statistik, Universität Konstanz, Germany}

\date{}

\begin{document}
	
	\pagenumbering{arabic}
	
	
	\maketitle
	\blfootnote{Math Subject Classification (2020): Primary 68Q45
    , 68Q42
    ; Secondary 68Q87
    , 68T50
    , 91F20
    , 03D05
    , 68Q70
    .}
    \blfootnote{Keystrings: probabilistic formal grammar, Chomsky hierarchy, complexity of languages.
	}
	\blfootnote{Corresponding Author: Lothar Sebastian Krapp, sebastian.krapp2@uzh.ch.
	}
	\vspace{-1.3cm}
	
	\begin{abstract}\small
		\noindent\textbf{Abstract.} We reconsider the theory of probabilistic formal languages generated by $n$-gram models and by probabilistic context-free grammars (PCFGs). The expected hierarchy of probabilistic grammars is established by proving that every probabilistic language generated by an $n$-gram model is also generated by some PCFG, while some probabilistic languages generated by PCFGs cannot be generated by any $n$-gram model.
        We introduce the notion of fully connected PCFGs, namely PCFGs in Chomsky normal form where every production rule only involving non-terminals has non-zero probability. Our main result shows that any probabilistic language generated by an $n$-gram model differs from any probabilistic language generated by a fully connected PCFG. Therefore, the class of probabilistic languages generated by $n$-gram models is not a subset of the class generated by \emph{fully connected} PCFGs.
	\end{abstract}
	

	
\section{Introduction}\label{sec:intro}
		
	Classically, Formal Language Theory considers formal grammars and the formal languages they generate. Roughly, a formal grammar consists of a vocabulary of symbols, a set of non-terminal symbols, and a set of production rules over the prior. A formal grammar generates a formal language, which is a set of possible sequences. Most prominently, in the hierarchy of formal grammars (also known as Chomsky hierarchy) the two lowest levels are regular grammars and context-free grammars (see \cite{Chomsky1956, CHOMSKY1959, chomskyMiller, Chomsky1963}), which are the mechanisms under discussion in this work. A language $L$ generated by a formal grammar $G$ is the set of all sequences which can be obtained by the production rules specified in $G$. Context-free grammars are more expressive in the sense that there are languages that are generated by context-free grammars but cannot be generated by regular grammars. Since the hierarchy of grammar is a hierarchy of set-inclusions, any regular grammar is also context-free by definition, so any regular language can be generated by a context-free grammar. However, for any regular language $L$ there is also a context-free grammar $G$ that is not regular such that $G$ generates $L$.
    Two distinct grammars that generate the same language $L$ but with potentially different parses are considered \textit{weakly equivalent} (see \cite{chomskyMiller}).
	
	In the classical formal language theoretic framework described above, the symbols and sequences that constitute a language are considered qualitatively, not quantitatively. More precisely, as soon as there is one possible derivation for a given sequence from the rules of the grammar, the sequence is part of the language. Thus, whether a sequence has exactly one derivation or several ones does not affect that it becomes a part of the generated language, neither does it matter how probable the sequence may be.
	
	In this work, we move away from this qualitative framework by working with \emph{probabilistic} (formal) languages. Rather than treating all production rules as ``equally likely'' for the production of a sequence, we assign probabilities to each such rule. Thus, each sequence that is generated by a probabilistic (formal) grammar is assigned a probability, where the sum of all probabilities for the sequences of symbols in the language equals $1$.
	
	Probabilistic grammars have wide applications from generating text (see \cite{jurafsky}) to deriving the probability of sequences for information theoretic measures (see \cite{SMITHLevy}) or to infer probabilistic dependencies between symbols in a dataset in a variety of fields (see \cite{bosshard204}). Probabilistic context-free grammars specifically have played a role in parsing theory, for instance when the most likely parse-tree for a given sequence has to be calculated (see e.g.\ \cite[Chapter 14]{jurafsky}). Finally, these grammars have been used as proxies for human syntax (see \cite{Jin2021}).
	
	In the context of classical formal grammars, one would usually ask a question like: ``Given this language $L$, is there a grammar $G$ with certain properties (like `being regular') that generates the language $L$?'' We ask this question for probabilistic grammars to find dividing lines between their ``probabilistic expressive power''. We focus on $n$-gram models as probabilistic generalisations of (right-)regular grammars and on probabilistic context-free grammars (PCFGs), more specifically focusing on such that are fully connected and in Chomsky normal form (PCFG$_{\text{FC}}$). We show that the following inclusions of classes of probabilistic languages are strict:
    \begin{center}
    \includegraphics[width=15cm]{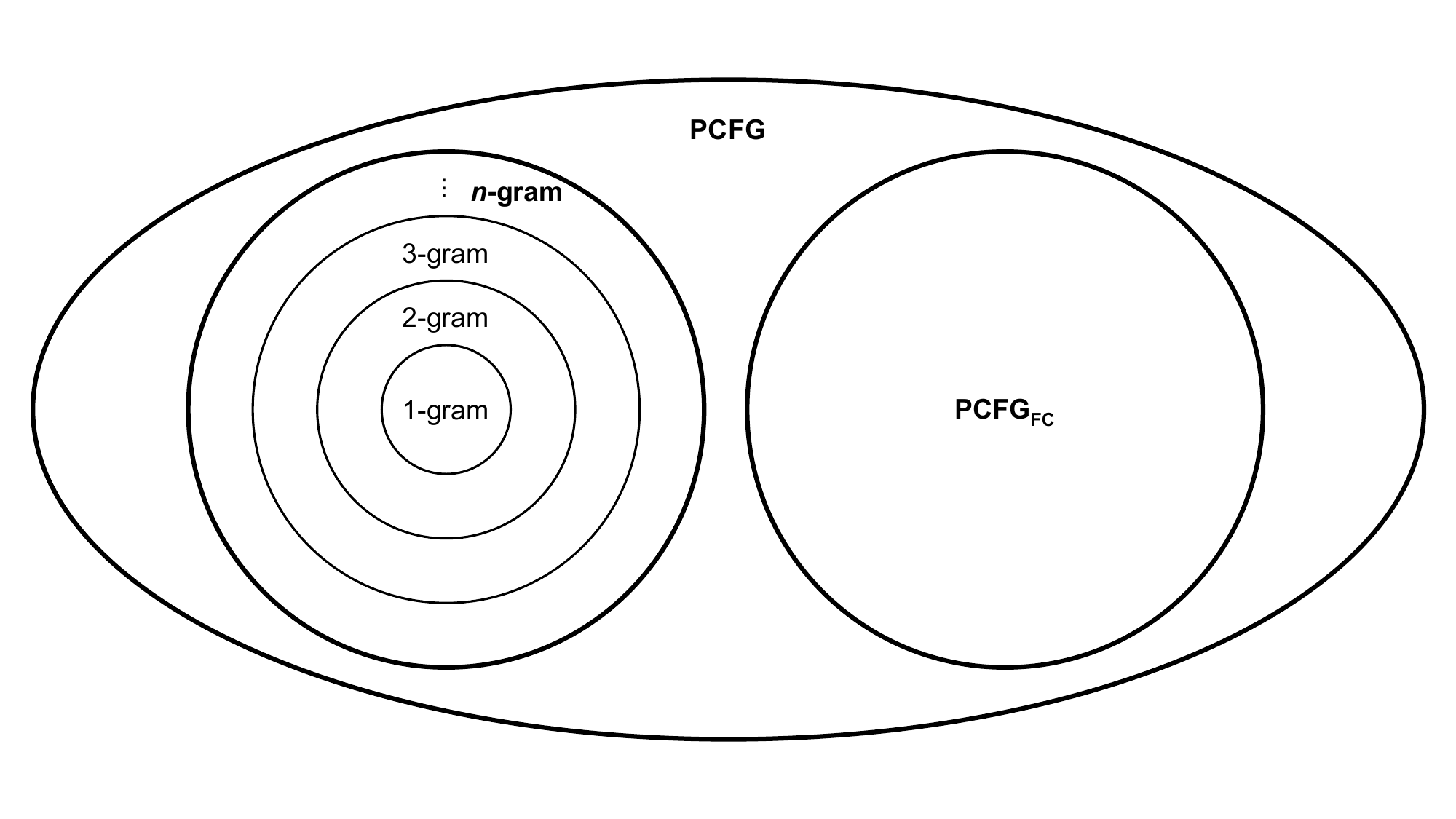}
    \end{center}
	More precisely, we show that any probabilistic language of a $1$-gram model can be generated by some $2$-gram model etc.\ and every probabilistic language of some $n$-gram model can also be generated by some PCFG. Strictness of these inclusions is verified by giving examples of certain probabilistic languages, e.g.\ of a PCFG generating a probabilistic language that is not generated by any $n$-gram model. Finally, our main result shows that the class of probabilistic languages generated by $n$-gram models is disjoint from the class generated by fully connected PCFGs.
   
    Just as there can be weakly equivalent discrete grammars, 
    there can be weakly equivalent probabilistic grammars. These are probabilistic grammars that generate the same probabilistic language (a probability distribution over sequences, see Section \ref{sec:probL}) with different parses. Our main result shows that fully connected PCFGs and $n$-gram models can never be weakly equivalent. They may allow the same sequences, but they will represent differing probability distributions over the possible sequences. This opens up new avenues of probing the likely underlying class of grammar of a corpus of sequences.

    This result is relevant as it allows considerations of \textit{how likely} it is that a set of data was generated by probabilistic context-free mechanism or a probabilistic regular mechanism. Under a discrete approach, this is not an answerable question (see \cite{Greibach1968}).

\section{Preliminaries on probabilistic languages}
\label{sec:probL}
	
	We denote by $\mathbb{N}$ the set of natural number without $0$ and for any $\ell \in \mathbb{Z}$ we let $[\ell]=\{1,\ldots,\ell\}$ if $\ell\geq 1$ and $[\ell]=\emptyset$ if $\ell\leq 0$.
	Throughout this work, we fix an alphabet $\Sigma=\{a_1,a_2,\ldots,a_k\}$ for some $k\in \mathbb{N}$. For notational convenience, we set $a=a_1$ and $b=a_2$. The alphabet $\Sigma$ also serves as the set of terminal symbols for the PCFGs we consider. We fix a symbol $\#$ that is not contained in $\Sigma$; it is later used as the stopping symbol for $n$-gram models. 
	
	We generally follow the basic notations and terminology on formal languages as introduced in \cite[Section~2.1]{wirth}.
	The set $\Sigma^+$ consists of all non-empty strings over the alphabet $\Sigma$. For a string $w\in \Sigma^+$, its length is denoted by $|w|$.  
	We use standard notation for powers and concatenations, e.g.\ $\Sigma^\ell$ consists of all strings of length $\ell$, the set $b\Sigma^\ell$ consists of all strings of  length $\ell+1$ starting with $b$, and $ba^\ell\in b\Sigma^\ell$ is the string $b\underbrace{a\ldots a}_{\ell \text{ times}}$ of length $\ell+1$. 
	We denote the empty string of length $0$ by $\varepsilon$. The result of any empty concatenation results in $\varepsilon$, and for any symbol $x\in \Sigma$, we have $x^0=\varepsilon$. We use the convention $\#^0 = \#$ for later notational convenience when introducing unigram models.
	We will not consider grammars that may generate empty strings.\footnote{Our general framework can be adjusted to grammars generating the empty string. This would, however, also result on some further changes to the technicalities of $n$-gram models and PCFGs that we introduce.}
	
	Throughout, we use a set of non-terminal symbols $\mathcal{V}=\{X_1,\ldots,X_m\}$ for some $m\in \mathbb{N}$. Again, we also set $A=X_1$ and $B=X_2$. While most probability measures are simply denoted by $P$, we add an index if the set on which this measure is defined may be ambiguous.
	
	As explained in the \nameref{sec:intro}, we focus on probabilistic languages that allow for any possible non-empty finite strings over $\Sigma$. The languages that we consider are therefore purely distinguished by the probability distribution on $\Sigma^+$. 
	
	\begin{definition}\label{def:problang}
		A \textbf{probabilistic language} (over the alphabet $\Sigma$) is a probability distribution $P$ on $\Sigma^+$ that assigns to every string a positive probability. In other words, a probabilistic language is a map
		$$P\colon \Sigma^+ \to (0,1)$$
		with $\sum_{w\in\Sigma^+} P(w) =1$.
	\end{definition}
	
	The probabilistic languages we consider in this work all arise from production systems, which we call \textbf{probabilistic grammars}. We introduce the relevant probabilistic grammars in the following.

	\begin{definition}\label{def:ngram}
		Let $n\in \mathbb{N}$. We set\footnote{The set $\Sigma_n^+ = \{\#^{n-1}\}\cup \#^{n-2}\Sigma \cup \ldots \cup \#\Sigma^{n-2}\cup \Sigma^{n-1}$ is the state space of the $n$-th order Markov chain underlying the $n$-gram model.} $$\Sigma_n^+ = \{\#^{n-1}\}\cup \bigcup_{\ell=2}^{n-1} \#^{n-\ell} \Sigma^{\ell-1} \cup \Sigma^{n-1}.$$ An \textbf{$n$-gram model} 
		is specified by giving,	for every $s\in \Sigma_n^+$, a probability distribution $P$ on the set of transitions $\{s\to x\mid x\in \Sigma\cup\{\#\}\}$, satisfying
		\begin{itemize}
			\item $P(\#^{n-1}\to \#) = 0$ (the empty string receives probability $0$), and
			
			\item $P(s\to x)>0$ for any $(s,x)\in \Sigma_n^+ \times (\Sigma\cup\{\#\})$ with $(s,x)\neq(\#^{n-1},\#)$.
		\end{itemize} The probabilistic language generated by this $n$-gram model is given by the following probability distribution on $\Sigma^+$:
		 If $w=x_1\ldots x_\ell \in \Sigma^\ell$ for $\ell\in [n-2]$, then
			\begin{align*}P(w)&=P(\#^{n-1}\to x_1)\cdot P(\#^{n-2}x_1\to x_{2})\cdot \ldots \\&\phantom{=\ }\cdot P(\#^{n-\ell}x_1\ldots x_{\ell-1}\to x_\ell)\cdot P(\#^{n-\ell-1}x_1\ldots x_{\ell}\to \#).\end{align*}
			If $w=x_1\ldots x_\ell\in \Sigma^\ell$ for $\ell\geq n-1$, then
			\begin{align*}P(w)&=P(\#^{n-1}\to x_1)\cdot P(\#^{n-2}x_1\to x_{2})\cdot \ldots \\&\phantom{=\ }\cdot P(x_1\ldots x_{n-1}\to x_n)\cdot P(x_2\ldots x_{n}\to x_{n+1}) \cdot \ldots
			 \\&\phantom{=\ }\cdot P(x_{\ell+1-n}\ldots x_{\ell-1}\to x_\ell)\cdot P(x_{\ell+2-n}\ldots x_{\ell}\to \#). \end{align*}
	\end{definition}
	
	\begin{remark}
			 In \Autoref{def:ngram} we stipulate ``$P(s\to x)>0$ for any $(s,x)\in \Sigma_n^+ \times (\Sigma\cup\{\#\})$ with $(s,x)\neq(\#^{n-1},\#)$''. 
			 This ensures that the Markov chain is as ``fully connected'' as reasonably possible, i.e.\ all transitions that are theoretically possible are required to have a non-zero probability. 
			 Indeed, if there were some $s\in \Sigma_n^+$ and $x\in \Sigma\cup\{\#\}$ with $(s,x)\neq (\#^{n-1},\#)$ such that $P(s\to x)=0$, this would give rise to a string $w\in \Sigma^+$ with $P(x)=0$. This string $w$ is obtained from the string $sx$ by deleting all occurrences of the symbol $\#$.
	\end{remark}

	As unigram models do not rely on transitional probabilities, we point out in the following some notational conventions.
	
	\begin{notation}
		Note that $\Sigma_1^+ = \{\#^0\}\cup \Sigma^{0}=\{\#,\varepsilon\}$. A unigram model is therefore specified
		by a probability distribution on $\{\#\to x\mid x\in \Sigma\}$ and a probability distribution on $\{\varepsilon\to x\mid x\in \Sigma\cup\{\#\}\}$, satisfying
		$P(s\to x)>0$ for any $(s,x)\in \{\#,\varepsilon\} \times (\Sigma\cup\{\#\})$ with $(s,x)\neq(\#,\#)$. The probabilistic language generated by this unigram model is given by the following probability distribution on $\Sigma^+$:
		If $w=x_1\ldots x_\ell\in \Sigma^\ell$ for $\ell\in \mathbb{N}$, then
		\begin{align*}P(w)&=P(\#\to x_1)\cdot P(\varepsilon \to x_{2})\cdot \ldots \cdot  P(\varepsilon \to x_{\ell})\cdot  P(\varepsilon \to \#). \end{align*}
		We will later sometimes omit the empty string $\varepsilon$ and simply write $P(\to x)$ instead of $P(\varepsilon\to x)$ for any $x\in \Sigma$.
	\end{notation}
	
	Recall that a \textbf{probabilistic context-free grammar (PCFG)} consists of a set of non-terminal symbols, a set of terminal symbols, a set of (production) rules, each of which is endowed with a probability, and a start symbol (see \cite[Section~14.1]{jurafsky}). The probability of a given parse tree $T$ is simply the product of all the product rules that are applied. Moreover, the probability for a certain sequence is the sum over all probabilities of parse trees yielding this sequence. We follow this framework with the exception that we do not fix a start symbol. We make this precise in the following.
	
	\begin{definition}\label{def:pcfg}
		A PCFG (without fixed starting symbol) consists of a probability distribution on the set of possible starting symbols $\{\to X\mid X\in \mathcal{V}\}$ and for each $X\in\mathcal{V}$ a probability distribution on all production rules\footnote{We use the convention that no empty string can be generated by applying a rule to a non-terminal symbol.} $\{X\to \beta \mid \beta \in (\Sigma\cup \mathcal{V})^+\}$. The probability of a sequence $w\in \Sigma^+$ in the probabilistic language generated by this PCFG is given by
		$$P(w)=\sum_{X\in \mathcal{V}}P(\to X)\sum_{T}P(T),$$
		where $T$ in the sum above ranges over all parse trees which have $X$ as starting symbol (root node) and generate the sequence $w$.
	\end{definition}
	
	The PCFGs which we focus on in this work are in Chomsky normal form (i.e.\ only production rules of the form $X\to YZ$ and $X\to x$ for $X,Y,Z\in \mathcal{V}$ and $x\in \Sigma$ are allowed) and fully connected (i.e.\ each production rule  $X\to YZ$ for some $X,Y,Z\in \mathcal{V}$ has non-zero probability). We make this precise in the following.
	
	\begin{definition}
		A fully connected probabilistic context-free grammar in Chomsky normal form, which we simply call a \textbf{fully connected PCFG}, consists of a probability distribution on the set of starting symbols $\{\to X\mid X\in \mathcal{V}\}$, and for each $X\in\mathcal{V}$ a probability distribution on all production rules $\{X\to YZ\mid Y,Z\in \mathcal{V}\}\cup\{X\to x\mid x\in \Sigma\}$ with $P(X\to YZ)>0$ for any $Y,Z\in \mathcal{V}$. Moreover, for any terminal $x\in \Sigma$ there must be at least one non-terminal $X\in\mathcal{V}$ such that $P(X\to x)>0$.
		The probabilistic language generated by this fully connected PCFG is given by the probability distribution on $\Sigma^+$ defined via parse trees as usual (see \Autoref{def:pcfg}). 
	\end{definition}
	
	\begin{remark}\label{rmk:consistentgrammars}
        \begin{enumerate}
            \item The discrete (i.e.\ non-probabilistic) counterparts of the grammars we consider are regular grammars and context-free grammars in Chomsky normal form, where each production rule of the form $X\to xY$ (for regular grammars) and $X\to YZ$, $X\to x$ (for context-free grammars) are included. 
            Here, $X,Y,Z$ are any non-terminal symbols and $x$ is a terminal symbol. These formal grammars will always produce the whole possible language of all non-empty strings $\Sigma^+$. Thus, they are always trivially weakly equivalent.

            \item Fully connected $n$-gram models and fully connected PCFGs ensure (in practice) that any possible non-empty string gets evaluated with a positive probability. This is crucial for likelihood calculations, which we will elaborate in \Autoref{rmk:likelihoods}.
            
            \item One crucial assumption on all probabilistic grammars we consider is that the equations for the sequence probabilities $P(w)$ given in the definitions of their probabilistic languages actually define probability measures. In other words, all probabilistic grammars have to be consistent (cf.\ \cite[page 495]{jurafsky}; see \cite[page 388] {manning} 
		      for an example of a fully connected PCFG that is inconsistent).
		      Moreover, in our specific context we also need all probabilities to be non-zero on $\Sigma^+$. For fully connected PCFGs this is ensured by the condition that for any $x\in \Sigma$ there is some $X\in\mathcal{V}$ with $P(X\to x)>0$. This condition alone does not ensure that a consistent PCFG in Chomsky normal form generates a probabilistic language in our sense, as \Autoref{ex:noproblanguage} below illustrates.
        \end{enumerate}
		
	\end{remark}

	\begin{example}\label{ex:noproblanguage}
		We present a consistent PCFG in Chomsky normal form that does not generate a probabilistic language. Let $k=m=2$, i.e.\ $\Sigma =\{a,b\}$ and $\mathcal{V}=\{A,B\}$.
		Consider the PCFG given by $P(\to A)=1$, $P(A\to AB)=P(A\to a)=\frac 12$ and $P(B\to b)=1$. While this is a consistent PCFG with $P(ab^\ell)=\frac{1}{2^\ell}$ for any $\ell\in \mathbb{N}$, it cannot generate the sequence $b$. Thus, $P(b)=0$ showing that this PCFG does not adhere to \Autoref{def:problang}. Note that this PCFG is not fully connected.
	\end{example}
	
	Throughout the rest of this work, all probabilistic grammars we consider are assumed to define a probabilistic language as introduced in \Autoref{def:problang}.
	
\section{Hierarchy of probabilistic grammars}
		
	In this section, we show the inclusion 
	\begin{align}\text{unigram}\subsetneq \text{bigram}\subsetneq \text{trigram}\subsetneq\ldots \subsetneq n\text{-gram}\subsetneq \ldots\subsetneq\text{PCFG}. \label{eq:inclusions}\end{align}	
	We will proceed as follows:
	\begin{itemize}
		\item \Autoref{ex:ngraminclusion} establishes inclusions for all $n$-gram models. More precisely, for a given $n$-gram model we construct an $(n+1)$-gram model that generates the same probabilistic language as the $n$-gram model.
		
		\item \Autoref{ex:exclusionngram} shows that the inclusions for $n$-gram models are strict (for $n\geq 2$). It does so by providing, for each $n\in \mathbb{N}$, an $(n+1)$-gram model whose probabilistic language cannot be generated by any $n$-gram model. Special care has to be taken for unigram models. For the strict inclusion, the alphabet $\Sigma$ needs to have at least two distinct symbols (see \Autoref{ex:exclusionngramsimple}).
		
		\item \Autoref{ex:unigrampcfg} (for $n=1$) and \Autoref{ex:ngrampcfg} (for $n\geq 2$) establish that the probabilistic language generated by any $n$-gram model can already be generated by some PCFG, as long as we have control over the number of non-terminal symbols. 
		
		\item Finally in \Autoref{ex:fullyconnectedsimple}, we give a simple example of fully connected PCFG only using one non-terminal symbol. 
        Due to our main result \Autoref{thm:main}, this PCFG generates a probabilistic language that cannot be generated by any $n$-gram model. 
	\end{itemize}

	\begin{construction}\label{ex:ngraminclusion}
		Let $n\in \mathbb{N}$ and let $G$ be an $n$-gram model generating the probabilistic language $P$. 
		We construct an $(n+1)$-gram model $H$  also generating the probabilistic language $P$. In order to distinguish between the probabilities given in $G$ and $H$, we denote them by $P_G$ and $P_H$, respectively. For any $x_1,\ldots,x_{n+1} \in \Sigma\cup\{\#\}$ with $x_1\ldots x_n \in \Sigma_{n+1}^+$, we set
		$$P_H(x_1\ldots x_n \to x_{n+1})=
		\begin{cases}
			P_G(x_2\ldots x_n \to x_{n+1}),& \text{ if }n\geq 2\\
			P_G(x_1^0 \to x_{2}),&\text{ if }n=1.
		\end{cases}
		$$ 
		We verify that the $(n+1)$-gram model $H$ generates the same probabilistic language as $G$.
		Let $w=x_1\ldots x_\ell\in \Sigma^+$. We first consider the case $n\geq 2$.
		If $\ell\in [n-2]$, then
		\begin{align*}
			P_H(w)&=P_H(\#^{n}\to x_1)\cdot P_H(\#^{n-1}x_1\to x_{2})\cdot \ldots \\&\phantom{=\ }\cdot P_H(\#^{n+1-\ell}x_1\ldots x_{\ell-1}\to x_\ell)\cdot P_H(\#^{n+1-\ell-1}x_1\ldots x_{\ell}\to \#)\\
			&=		P_G(\#^{n-1}\to x_1)\cdot P_G(\#^{n-2}x_1\to x_{2})\cdot \ldots \\&\phantom{=\ }\cdot P_G(\#^{n-\ell}x_1\ldots x_{\ell-1}\to x_\ell)\cdot P_G(\#^{n-\ell-1}x_1\ldots x_{\ell}\to \#)\\
			&=P_G(w).
			\end{align*}
		If $\ell = n-1$, then
		\begin{align*}
			P_H(w)&=P_H(\#^{n}\to x_1)\cdot P_H(\#^{n-1}x_1\to x_{2})\cdot \ldots \\&\phantom{=\ }\cdot P_H(\#^{n+1-\ell}x_1\ldots x_{\ell-1}\to x_\ell)\cdot P_H(\#^{n+1-\ell-1}x_1\ldots x_{\ell}\to \#)\\
			&=		P_G(\#^{n-1}\to x_1)\cdot P_G(\#^{n-2}x_1\to x_{2})\cdot \ldots \\&\phantom{=\ }\cdot P_G(\#^{n-\ell}x_1\ldots x_{\ell-1}\to x_\ell)\cdot P_G(x_1\ldots x_{\ell}\to \#)\\
			&=P_G(w).
		\end{align*}
		If $\ell \geq n$, then
		\begin{align*}
			P_H(w)&=P_H(\#^{n}\to x_1)\cdot P_H(\#^{n-1}x_1\to x_{2})\cdot \ldots \cdot P_H(\#x_1\ldots x_{n-1} \to x_{n})  \\&\phantom{=\ }\cdot P_H(x_1\ldots x_{n}\to x_{n+1})\cdot P_H(x_2\ldots x_{n+1}\to x_{n+2}) \cdot \ldots
			\\&\phantom{=\ }\cdot P_H(x_{\ell-n}\ldots x_{\ell-1}\to x_\ell)\cdot P_H(x_{\ell+1-n}\ldots x_{\ell}\to \#)\\
			&=P_G(\#^{n-1}\to x_1)\cdot P_G(\#^{n-2}x_1\to x_{2})\cdot \ldots \cdot P_G(x_1\ldots x_{n-1} \to x_{n})  \\&\phantom{=\ }\cdot P_G(x_2\ldots x_{n}\to x_{n+1})\cdot P_G(x_3\ldots x_{n+1}\to x_{n+2}) \cdot \ldots
			\\&\phantom{=\ }\cdot P_G(x_{\ell-n+1}\ldots x_{\ell-1}\to x_\ell)\cdot P_G(x_{\ell+2-n}\ldots x_{\ell}\to \#)\\
			&=P_G(w).
		\end{align*}
		Now we consider the case $n=1$ and obtain
		\begin{align*}
			P_H(w)&=P_H(\#\to x_1)\cdot P_H(x_1\to x_{2})\cdot \ldots \cdot P_H(x_{\ell-1}\to x_\ell)\cdot P_H(x_{\ell}\to \#)\\
			&=P_G(\#\to x_1)\cdot P_G(\to x_{2})\cdot \ldots \cdot P_G(\to x_\ell)\cdot P_G(\to \#)\\
			&=P_G(w),
		\end{align*}
		as required.
	\end{construction}
	
	We next present a minimalistic examples of an $(n+1)$-gram model whose probabilistic language cannot be generated by any $n$-gram model. We point out that due to \Autoref{ex:ngraminclusion}, the probabilistic language can also not be generated by any $r$-gram model with $r\leq n$.

	\begin{example}\label{ex:exclusionngram}
		Let $n\in \mathbb{N}$ with $n\geq 2$.
		Consider the $(n+1)$-gram model $G$ given by the following probability distributions:
		\begin{itemize}
			\item $P(\#^n\to x)= \frac 1k$ for any $x\in \Sigma$;
			
			\item $P(\#^\ell s\to \#)=\frac 34$, and $P(\#^\ell s\to x)=\frac 1{4k}$ for any $x\in \Sigma$,  $s\in \Sigma^{n-\ell}$ and $1\leq \ell\leq n-1$;
			
			\item $P(s\to x)=\frac 1{3(k+1)}$ for any $s\in \Sigma^n$ and $x\in \Sigma\cup\{\#\}$.
		\end{itemize}
		Assume, for a contradiction, that $H$ were an $n$-gram model generating the same probabilistic language as $G$.
		Set $C=P(a^{n-1})=P_H(a^{n-1})$. We have
		\begin{align*}
			CP_H(a^{n-1}\to a) &= \frac{P_H(a^{n-1})}{P_H(a^{n-1}\to \#)}\cdot P_H(a^{n-1}\to a)\cdot P_H(a^{n-1}\to \#)\\
			&=P_H(a^n)=P(a^n)\\
			&=\frac{P(a^{n-1})}{P(\#a^{n-1}\to \#)}\cdot P(\#a^{n-1}\to a)\cdot P(a^{n}\to \#)\\
			&=\frac{C}{\frac 34}\cdot \frac 1{4k}\cdot \frac 1{3(k+1)}\\
			&=\frac{C}{9k(k+1)}.
		\end{align*}
		Thus, $P_H(a^{n-1}\to a)=\frac 1{k(k+1)}$. Further,
		\begin{align*}
		\frac1{9k(k+1)} {P(a^n)} &= P_H(a^n)\cdot P_H(a^{n-1}\to a)\\
		&= \frac{P_H(a^{n})}{P_H(a^{n-1}\to \#)}\cdot P_H(a^{n-1}\to a)\cdot P_H(a^{n-1}\to \#)\\
		&= P_H(a^{n+1}) = P(a^{n+1})\\
		&= \frac{P(a^{n})}{P(a^{n}\to \#)}\cdot P(a^{n}\to a)\cdot P(a^{n}\to \#)\\
		&= P(a^{n})\cdot P(a^{n}\to a)\\
		&= \frac 1{3(k+1)} P(a^n),
		\end{align*}
		yielding the desired contradiction.
	\end{example}
	
	For unigram models we have to adapt \Autoref{ex:exclusionngram} and require an alphabet of size $k\geq 2$, as we observe in the following remark.
	
	\begin{remark}
		Suppose that $k=1$, i.e.\ $\Sigma = \{a\}$, and let $G$ be a bigram model. Then $G$ is specified by the two probabilities $P_G(a\to a)$ and $P_G(a\to \#)$, as $P(\#\to a)=1$. Let $H$ be the unigram model given by $P_H(\#\to a)=1$, $P_H(\to a)=P_G(a\to a)$ and $P_H(\to \#)=P_G(a\to \#)$. Then it is straight-forward to verify that $G$ and $H$ generate the same probabilistic language.
	\end{remark}
	
	\begin{example}\label{ex:exclusionngramsimple}
		Suppose that $k\geq 2$. Consider the bigram model $G$ given by probabilities:
			\begin{itemize}
				\item $P(\#\to x)= \frac 1k$ for any $x\in \Sigma$;
				
				\item $P(a\to \#)=\frac 23$ and $P(a\to x)=\frac1 {3k}$ for any $x\in \Sigma$;
				
				\item $P(x\to y)=\frac1{k+1}$ for any $x\in (\Sigma\setminus\{a\})$ and $y\in \Sigma\cup\{\#\}$.
			\end{itemize}
		Let $H$ be a unigram model and assume, for a contradiction, that $H$ generates the same probabilistic language as $G$. Then
			$$P_H(aab)=P_H(\#\to a)P_H(\to a)P_H(\to b)=P_H(\#\to a)P_H(\to b)P_H(\to a)=P_H(aba).$$
		Thus,
		\begin{align*}
			\frac1k \cdot \frac1{3k}\cdot \frac1{3k}\cdot \frac1{k+1}&=P(\#\to a)P(a\to a)P(a\to b)P(b\to \#)\\
			&=P(aab)=P(aba)\\
			&=P(\#\to a)P(a\to b)P(b\to a)P(a\to \#)\\
			&=\frac1k \cdot \frac1{3k}\cdot \frac1{k+1}\cdot \frac 23.
		\end{align*}
		yielding the contradiction $\frac 1{3k}=\frac 23$.
	\end{example}

	Next, we compare $n$-gram models with PCFGs. We first show that any language an $n$-gram model generates is already generated by a PCFG in Chomsky normal form.

	\begin{construction}\label{ex:unigrampcfg} 
		Let $G$ be an unigram model. We construct a PFCG $H$ over the set of non-terminals
		$$\mathcal{V}=\{Y_x\mid x\in \Sigma\}\cup\{Z_x\mid x\in \Sigma\}.$$
		The probabilities to specify $H$ are given as follows for any $x,y\in \Sigma$:
		\begin{itemize}			
			\item $P_H(\to Y_x)=P_G(\#\to x)$;
			
			\item $P_H(Y_x\to Z_x Y_y)=P_G(\to y)$;
			
			\item $P_H(Y_x\to x)=P_G(\to \#)$;
			
			\item $P_H(Z_x\to x) = 1$.
		\end{itemize}
		Let $w=x_1\ldots x_\ell \in \Sigma^+$. Then
		\begin{align*}
			P_H(w)&=P_H(\to Y_{x_1})\cdot P_H(Y_{x_1}\to Z_{x_1} Y_{x_2})P_H(Z_{x_1}\to x_1)\cdot \ldots \\
			&\phantom{=}\cdot P_H(Y_{x_{\ell-1}}\to Z_{x_{\ell-1}} Y_{x_\ell})P_H(Z_{x_{\ell-1}}\to x_{\ell-1})\cdot P_H(Y_{x_\ell}\to x_\ell)\\
			&=P_G(\#\to x_1)\cdot P_G(\to x_2)\cdot 1\cdot \ldots \cdot P_G(\to x_\ell)\cdot 1\cdot P_G(\to \#)\\
			&=P_G(w).
		\end{align*}
		Thus, $G$ and $H$ generate the same probabilistic language.
	\end{construction}
	
	\begin{construction}\label{ex:ngrampcfg} 
		Let $n\in \mathbb{N}$ with $n\geq 2$ and let $G$ be an $n$-gram model. We construct a PFCG $H$ over the set of non-terminals
		$$\mathcal{V}=\{Q_{w}\mid w\in \Sigma^{n-1}\}\cup\{Y_w\mid w\in \Sigma^+, |w|\leq n-1\}\cup\{Z_x\mid x\in \Sigma\}.$$
		We set
		\begin{itemize}
			\item $P(\to Y_w)=P_G(w)$ for any $w\in \Sigma^+$ with $|w|\leq n-1$,
			
			\item $P(\to Q_{x_1\ldots x_{n-1}})=P_G(\#^{n-1}\to x_1)\cdot \ldots\cdot P_G(\#x_1\ldots x_{n-2}\to x_{n-1})$ for any $x_1,\ldots,x_{n-1}\in \Sigma$,
			
			\item $P(Y_{x_1\ldots x_\ell} \to Z_{x_1}Y_{x_2\ldots x_\ell}) = 1$, for any $\ell\geq 2$ and $x_1,\ldots,x_{\ell}\in \Sigma$,
			
			\item $P(Y_{x}\to x) = P(Z_x\to x) = 1$ for any $x\in \Sigma$,

			\item $P(Q_{x_1\ldots x_{n-1}}\to Z_{x_1}Q_{x_2\ldots x_{n-1}y})=P_G(x_1\ldots x_{n-1}\to y)$ for any $x_1,\ldots,x_{n-1},y\in \Sigma$,
			
			\item $P(Q_{x_1\ldots x_{n-1}}\to Z_{x_1}Y_{x_2\ldots x_{n-1}})=P_G(x_1\ldots x_{n-1}\to \#)$ for any $x_1,\ldots,x_{n-1}\in \Sigma$.
		\end{itemize}
		It is a technical but straight-forward procedure to verify that $G$ and $H$ generate the same probabilistic language.
	\end{construction}

	\begin{example}\label{ex:fullyconnectedsimple}
		Let $k=m=1$ and let $G$ be the PCFG given by $P(\to A)=1$, $P(A\to AA)=P(A\to a)=\frac 12$. Then clearly $L=\{a\}^+=\{a,aa,aaa,\ldots\}$. To compute $P(a^r)$ one has to count the binary parse trees that yield $a^r$, which is exactly the $(r-1)$-th Catalan number
        $$C_r = \frac{(2(r-1))!}{r!(r-1)!}.$$
        Each such parse tree $T$ has probability $P(T)=\frac{1}{2^{r-1}}\cdot \frac{1}{2^r} = \frac{1}{2^{2r-1}}$, as the production rule $A\to AA$ has to be applied $r-1$ times, and the production rule $A\to a$ has to be applied $r$ times. 
        See  Figure~\ref{eq:prob_func} for further details.	
	\end{example}

\section{Disjoint probabilistic grammars}

	In this section, we prove our main result \Autoref{thm:main} that the probabilistic language of any $n$-gram model is distinct from the probabilistic language of any fully connected PCFG.

	\begin{lemma}\label{lem:powers}\label{lem:powers2}
		Let $n\in \mathbb{N}$. Then for any $n$-gram model and any $x\in \Sigma$,
		there are some $c,d\in (0,\infty)$ and $r\in (0,1)$ such that for any $\ell\geq n$, 
		$$P(x^\ell)=cr^{\ell},$$
		and for any $\ell\in \mathbb{N}$, 
		$$P(x^\ell)\geq dr^{\ell}.$$
	\end{lemma}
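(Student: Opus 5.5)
We fix an $n$-gram model and a symbol $x\in\Sigma$, and write $P(x^\ell)$ as a product of transition probabilities following \Autoref{def:ngram}. For $n\ge 2$ and $\ell\ge n-1$, the string $x^\ell$ passes through the states $\#^{n-1},\#^{n-2}x,\ldots,\#x^{n-2}$. From then on it only visits the state $x^{n-1}$. Setting
$$p=P(x^{n-1}\to x)\in(0,1),\qquad q=P(x^{n-1}\to\#)\in(0,1),\qquad \pi=P(\#^{n-1}\to x)\cdots P(\#x^{n-2}\to x),$$
we expect
$$P(x^\ell)=\pi\,p^{\ell-(n-1)}\,q \quad\text{for all } \ell\ge n-1.$$
This gives the first claim with $r=p$ and $c=\pi q\,p^{-(n-1)}>0$, valid even for $\ell\ge n-1$, hence for $\ell\ge n$. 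Positivity and $p,q<1$ follow from the full-connectedness requirement in \Autoref{def:ngram}: all these transitions have positive probability, and the transition probabilities out of the state $x^{n-1}$ sum to $1$ over at least two options. For $n=1$ we use the unigram notation instead: $P(x^\ell)=P(\#\to x)P(\to x)^{\ell-1}P(\to\#)$ for all $\ell\ge1$. So we take $r=P(\to x)$ and $c=P(\#\to x)P(\to\#)/r$.

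For the lower bound we need to handle the finitely many short lengths $\ell\in[n-1]$ not already covered. I would set
$$d=\min\Bigl(\{c\}\cup\{P(x^\ell)r^{-\ell}\mid \ell\in[n-2]\}\Bigr).$$
This is a minimum of finitely many positive numbers, since every string has positive probability by \Autoref{def:problang}. Then for $\ell\ge n-1$ we get $P(x^\ell)=cr^\ell\ge dr^\ell$, and for $\ell\le n-2$ the bound $P(x^\ell)\ge dr^\ell$ holds by the choice of $d$.

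The only step requiring care is the bookkeeping of the state indices in the product formula of \Autoref{def:ngram}. The main point is checking that for $\ell\ge n-1$ every transition after the initial $\#$-padded prefix really has source state $x^{n-1}$, including the final transition to $\#$. This includes the boundary case $\ell=n-1$, where there are no $p$ factors and the stop transition comes from $x^{n-1}$. I expect no real obstacle beyond this index verification.
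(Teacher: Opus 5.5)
Your proof is correct and follows the paper's argument. You take $r=P(x^{n-1}\to x)$, read the geometric form off the product formula (your constant $c=\pi q\,p^{-(n-1)}$ equals the paper's $P(x^n)r^{-n}$), and choose $d$ as a minimum over $c$ and finitely many short-length ratios $P(x^\ell)r^{-\ell}$. The differences are cosmetic: your closed form already holds from $\ell=n-1$, so minimising over $[n-2]$ suffices where the paper uses $[n-1]$, and you treat $n=1$ separately instead of relying on the convention $x^0=\varepsilon$.
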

	
	\begin{proof}
		Set $r=P(x^{n-1}\to x)$ and $c= {P(x^{n})}P(x^{n-1}\to x)^{-n}$.
		Then
		\begin{align*}
			P(x^\ell)= \frac{P(x^{n})}{P(x^{n-1}\to \#)}P(x^{n-1}\to x)^{\ell-n}P(x^{n-1}\to \#) = {P(x^{n})}P(x^{n-1}\to x)^{\ell-n}=cr^\ell.
		\end{align*}
		It now suffices to set
		$$d=\min\left(\{c\}\cup\left\{\left.\tfrac{P(x^i)}{r^i}\ \right|\ i\in[n-1]\right\}\right).$$
	\end{proof}
	
	\begin{notation}
		For a given PCFG, we use the following notation. For $\alpha,\beta,\gamma\in [m]$ set $s_\alpha=P(\to X_\alpha)$, $t_{\alpha\beta\gamma}=P(X_\alpha \to X_\beta X_\gamma)$ and $u_{\alpha}=P(X_\alpha\to a)$. For any $\ell\in \mathbb{N}$, denote by $p_{\alpha,\ell}$ to be the conditional probability of obtaining the sequence $a^\ell$ from a parse tree whose root node is $X_\alpha$.
	\end{notation}
	
	The proof of the following lemma is a straight-forward calculation on parse tree probabilities.
	
	\begin{lemma}\label{lem:iterprob}
		For any fully connected PCFG, any $\alpha\in [m]$ and any $\ell\geq2$, we have the following:
		\begin{align}
			\notag p_{\alpha,1}&= u_\alpha,\\
			p_{\alpha,\ell} &= \sum_{\beta,\gamma\in [m]} t_{\alpha\beta\gamma} \sum_{i=1}^{\ell-1}p_{\beta,i}p_{\gamma,\ell-i},\label{lem:iterprob:3}\\
			P(a^\ell)&=\sum_{\alpha\in [m]}s_\alpha p_{\alpha,\ell}.\label{lem:iterprob:4}
		\end{align}
	\end{lemma}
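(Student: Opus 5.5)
We prove the three identities directly from the definition of the probability of a parse tree. We use that a fully connected PCFG is in Chomsky normal form, so every parse tree with root $X_\alpha$ either consists of a single terminal rule $X_\alpha\to x$, or begins with a binary rule $X_\alpha\to X_\beta X_\gamma$ followed by two subtrees rooted at $X_\beta$ and $X_\gamma$. By \Autoref{def:pcfg}, $p_{\alpha,\ell}$ is the sum of $P(T)$ over all parse trees $T$ with root $X_\alpha$ and yield $a^\ell$, and $P(T)$ is the product of the probabilities of the rules used in $T$.

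\emph{Case $\ell=1$.} A tree yielding a single symbol cannot start with a binary rule. Indeed, each non-terminal derives a non-empty string, so a tree starting with a binary rule has yield of length at least $2$. Hence the only tree with root $X_\alpha$ and yield $a$ is $X_\alpha\to a$, and $p_{\alpha,1}=u_\alpha$.

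\emph{Case $\ell\geq 2$.} Terminal rules produce length $1$, so every tree must start with some rule $X_\alpha\to X_\beta X_\gamma$. Its left subtree yields a prefix $a^i$ and its right subtree yields $a^{\ell-i}$, where $1\le i\le \ell-1$ because both yields are non-empty. This gives a bijection between
\begin{itemize}
\item trees with root $X_\alpha$ and yield $a^\ell$, and
\item quadruples $(\beta,\gamma,i,(T_1,T_2))$, where $T_1$ has root $X_\beta$ and yield $a^i$, and $T_2$ has root $X_\gamma$ and yield $a^{\ell-i}$.
\end{itemize}
The split point $i$ is determined by the tree, since $i$ is the length of the left subtree's yield. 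Under this bijection $P(T)=t_{\alpha\beta\gamma}P(T_1)P(T_2)$. Summing over all quadruples and factoring the double sum over $(T_1,T_2)$ into a product of single sums gives \eqref{lem:iterprob:3}. All terms are non-negative, so rearranging these possibly infinite sums is legitimate.

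\emph{Identity \eqref{lem:iterprob:4}.} This is simply \Autoref{def:pcfg} applied to $w=a^\ell$, with the inner sum over trees rooted at $X_\alpha$ rewritten as $p_{\alpha,\ell}$.

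The only point needing care is the bijection in the case $\ell\geq2$: uniqueness of the root decomposition, and the range $1\le i\le \ell-1$, which relies on the convention that no rule generates the empty string. The rest is bookkeeping.
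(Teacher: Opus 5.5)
Your proof is correct and is exactly the ``straight-forward calculation on parse tree probabilities'' that the paper leaves unwritten: you decompose each tree at its root rule, use Chomsky normal form and non-empty yields to get $p_{\alpha,1}=u_\alpha$ and the range $1\le i\le \ell-1$ in \eqref{lem:iterprob:3}, and read \eqref{lem:iterprob:4} directly off the definition of a PCFG's language. One small simplification: for fixed $\ell$ only finitely many trees have yield $a^\ell$, since each uses exactly $\ell$ terminal rules and $\ell-1$ binary rules, so all sums are finite and your remark on rearranging infinite sums is not needed.
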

	
	\begin{theorem}\label{thm:main}
		Let $n\in \mathbb{N}$. Then for any $n$-gram model $G$ and any fully connected PCFG $H$ their generated probabilistic languages $P_G$ and $P_H$ do not coincide, i.e.\ $P_G\neq P_H$.
	\end{theorem}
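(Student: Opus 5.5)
We argue by contradiction. The idea is that the fully connected branching rules force the probabilities $P_H(a^\ell)$ to satisfy a self-convolution lower bound that no sequence of the form $cr^\ell$ can satisfy. We work only with the strings $a^\ell$, $a=a_1$. Suppose that $P_G=P_H=:P$. By \Autoref{lem:powers}, applied to the $n$-gram model $G$ and $x=a$, there are $c,d\in(0,\infty)$ and $r\in(0,1)$ with $P(a^\ell)=cr^\ell$ for all $\ell\geq n$ and $P(a^\ell)\geq dr^\ell$ for all $\ell\in\mathbb{N}$. The plan is to show that for the fully connected PCFG $H$ there is a constant $\kappa>0$ such that
\begin{align*}
P(a^\ell)\;\geq\;\kappa\sum_{i=1}^{\ell-1}P(a^i)\,P(a^{\ell-i})\qquad\text{for all }\ell\geq 2.
\end{align*}
Plugging in the two bounds from \Autoref{lem:powers} then gives, for $\ell\geq n$,
\begin{align*}
cr^\ell\;\geq\;\kappa(\ell-1)d^2r^\ell,
\end{align*}
that is, $c\geq\kappa d^2(\ell-1)$ for every $\ell\geq n$. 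This is impossible as $\ell\to\infty$.

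To obtain the convolution inequality I will use \Autoref{lem:iterprob} and the following constants.
\begin{itemize}
\item Let $t=\min_{\alpha,\beta,\gamma\in[m]}t_{\alpha\beta\gamma}$. This is positive precisely because $H$ is fully connected.
\item Fix $\alpha_0$ with $s_{\alpha_0}>0$, which exists since $\sum_\alpha s_\alpha=1$.
\item Set $S_i=\sum_{\beta\in[m]}p_{\beta,i}$.
\end{itemize}
By \eqref{lem:iterprob:4} and $s_\alpha\leq 1$, we have $P(a^i)=\sum_\alpha s_\alpha p_{\alpha,i}\leq S_i$. By \eqref{lem:iterprob:3}, bounding every $t_{\alpha_0\beta\gamma}$ below by $t$,
\begin{align*}
p_{\alpha_0,\ell}\;\geq\;t\sum_{i=1}^{\ell-1}\sum_{\beta,\gamma\in[m]}p_{\beta,i}\,p_{\gamma,\ell-i}\;=\;t\sum_{i=1}^{\ell-1}S_iS_{\ell-i}.
\end{align*}
Combining these,
\begin{align*}
P(a^\ell)\;\geq\;s_{\alpha_0}p_{\alpha_0,\ell}\;\geq\;s_{\alpha_0}t\sum_{i=1}^{\ell-1}S_iS_{\ell-i}\;\geq\;s_{\alpha_0}t\sum_{i=1}^{\ell-1}P(a^i)\,P(a^{\ell-i}).
\end{align*}
So $\kappa=s_{\alpha_0}t$ works.

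I expect no serious obstacle. The only delicate point is the role of full connectivity: it is what makes the constant $t$ uniform over all triples $(\alpha,\beta,\gamma)$. This uniformity is what allows us to sum over all $\beta,\gamma$ and factor the bound into $S_iS_{\ell-i}$, without knowing which non-terminals actually produce $a$. The positivity $P(a^i)>0$, needed for $d>0$, is guaranteed since both $G$ and $H$ define probabilistic languages in the sense of \Autoref{def:problang}. Intuitively, the number of parse trees grows like Catalan numbers, which is too fast for the exact geometric tail of an $n$-gram model. The linear factor $\ell-1$ already suffices to break it.
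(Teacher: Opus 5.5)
Your proof is correct and is essentially the paper's argument: full connectivity gives a convolution bound $P(a^\ell)\geq\kappa\sum_{i=1}^{\ell-1}P(a^i)P(a^{\ell-i})$, and together with the two bounds of Lemma~\ref{lem:powers} this forces a factor $\ell-1$ that the exact geometric tail $cr^\ell$ cannot accommodate. The only difference is how you obtain $\kappa$, and it is cosmetic: you take $\kappa=s_{\alpha_0}\min_{\alpha,\beta,\gamma}t_{\alpha\beta\gamma}$ via a single start symbol and $S_i\geq P(a^i)$, whereas the paper uses $\kappa=\min\{t_{\alpha\beta\gamma}/s_\gamma : s_\gamma\neq 0\}$ and keeps only the diagonal terms $\beta=\alpha$ (your version is arguably slightly cleaner).
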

	
	\begin{proof}
		Assume, for a contradiction, that both models $G$ and $H$ generate the same probabilistic language. We therefore do not have to distinguish between the probability distributions $P_G$ and $P_H$ on $\Sigma^+$.
		By \autoref{lem:powers2}, there are $c,d\in (0,\infty)$ and $r\in (0,1)$ such that
		$P(a^\ell)=cr^{\ell}$ for any $\ell\geq n$, and
		$P(a^\ell)\geq dr^{\ell}$ for any $\ell\in \mathbb{N}$.
		Turning to the PCFG $H$, we show that there is some $D\in (0,\infty)$ such that for any $\ell\in \mathbb{N}$ we have
		$$P(a^\ell)\geq D(\ell-1)r^\ell.$$
		This leads to the required contradiction, as we then obtain
		$$cr^{\ell} \geq D(\ell-1)r^\ell$$
		for any $\ell\geq n$, which is not possible, as $c$ and $D$ are positive constants.
		
		Set 
		$$D=\min\left\{\left. \frac{d^2t_{\alpha\beta\gamma}}{s_\gamma} \ \right|\ \alpha,\beta,\gamma\in[m], s_\gamma\neq 0\right\}.$$ Note that $D>0$, as $H$ is fully connected. 
		For $\ell=1$, we have $P(a)>0=D(1-1)r$. Now let $\ell\in \mathbb{N}$ be arbitrary. We apply \eqref{lem:iterprob:3} and  \eqref{lem:iterprob:4} several times in order to establish $P(a^{\ell+1}) \geq D\ell r^{\ell+1}$: 
		\begin{align*}
			P(a^{\ell+1}) &= \sum_{\alpha \in [m]} s_\alpha p_{\alpha,\ell +1}\\
			&= \sum_{\alpha \in [m]} s_\alpha \sum_{\beta,\gamma\in [m]} t_{\alpha\beta\gamma} \sum_{i=1}^{\ell}p_{\beta,i}p_{\gamma,\ell+1-i}\\
			& = \sum_{i=1}^{\ell}\sum_{\alpha \in [m]}\sum_{\beta,\gamma\in [m]} (s_\alpha   p_{\beta,i})(t_{\alpha\beta\gamma}p_{\gamma,\ell+1-i})\\
			& \geq \sum_{i=1}^{\ell}\sum_{\alpha \in [m]}\sum_{\beta,\gamma\in [m]} (s_\alpha   p_{\beta,i})(Dd^{-2}s_\gamma p_{\gamma,\ell+1-i})\\
			& \geq \frac D{d^2}\sum_{i=1}^{\ell}\sum_{\alpha \in [m]}\sum_{\beta\in [m]} (s_\alpha   p_{\beta,i})\sum_{\gamma\in [m]} (s_\gamma p_{\gamma,\ell+1-i})\\
			& \geq \frac D{d^2}\sum_{i=1}^{\ell}\sum_{\alpha \in [m]} (s_\alpha   p_{\alpha,i})\sum_{\gamma\in [m]} (s_\gamma p_{\gamma,\ell+1-i})\\
			& = \frac D{d^2}\sum_{i=1}^{\ell}P(a^i)P(a^{\ell+1-i})\\
			& \geq \frac D{d^2}\sum_{i=1}^{\ell}dr^i dr^{\ell+1-i}\\
			& =  D\sum_{i=1}^{\ell}r^{\ell+1}\\
			&=D\ell r^{\ell+1},
		\end{align*}
		as required.		
	\end{proof}

	We conclude with a final remark on likelihood calculations.
	
	\begin{remark}\label{rmk:likelihoods}
		The practical application of probabilistic languages is as follows. Let $\underline{w}=(w_1,\ldots,w_j)\in (\Sigma^+)^j$ be a string of finite sequences over the alphabet $\Sigma$. Given a probabilistic language $P$, the likelihood that this probabilistic language generated this string is simply the product \begin{align}P(w_1)\cdot \ldots \cdot P(w_j).\label{eq:likelihood}\end{align} Likewise, given a probabilistic grammar $G$, the likelihood that this grammar generated the sequence $\underline{w}$ is given by \eqref{eq:likelihood}, where $P$ is the probabilistic language generated by $G$.
		
		In application, as task would be to find a probabilistic grammar with the highest likelihood for generating the sequence $\underline{w}$. However, due to the hierarchy we have established in this work, a sequence generated by an $n$-gram model may also have been generated by an $(n+1)$-gram model which mimics the behaviour of and $n$-gram model (see \Autoref{ex:ngraminclusion}). In fact, for a finite string $\underline{w}$ generated by an $n$-gram model $G$, it is plausible that there are $(n+1)$-gram models $H$ having a higher likelihood for generating the string $\underline{w}$ than $G$.
		
		Due to our main result \Autoref{thm:main}, this effect can be limited if $n$-gram models are compared to fully connected PCFGs: For any $n$-gram model $G$ and any fully connected PCFG $H$, we have $P_G\neq P_H$. If $\underline{w}$ is a sufficiently long string generated by $G$, it approximates the probabilistic language generated by $G$. Thus, the original model $G$ will have a higher likelihood to have generated $\underline{w}$ than any fully connected PCFG. The same result holds if $\underline{w}$ was initially generated by a fully connected PCFG.
	\end{remark}

	\pagebreak
	
	\noindent \textbf{CRediT Authorship Contribution Statement:} 
	Lothar Sebastian Krapp: conceptualization (equal); 
    investigation (lead);  
    writing – original draft (lead); writing – review \& editing (equal). Remo Nitschke: conceptualization (equal); investigation (support); writing – original draft (support); writing – review \& editing (equal).
	\\\\
	\textbf{Funding:} This research was funded by the NCCR Evolving Language, Swiss National Science Foundation Agreement \# 51NF40\_180888.
	\\\\
	\textbf{Acknowledgments:} We thank Balthasar Bickel and Jannik Kochert for insightful discussions about probabilistic languages.
	\\\\
	\textbf{Conflict of Interest:} The authors declare no conflict of interest. The funders had no role in the design and conduct of the study; preparation, review, or approval of the manuscript; and decision to submit the manuscript for publication.
	\\\\
	\textbf{Data Availability:} Not applicable.
	\\\\
	\textbf{Code Availability:} Not applicable.
	\\\\
	\textbf{Ethical Approval:} Not applicable.
	\\\\
	\textbf{Consent to Participate:} Not applicable.
	\\\\
	\textbf{Consent for Publication:} Not applicable. This work does not include data or images that require permission to be published.

\begin{figure}[b] 
\centering
\begin{minipage}[b]{.45\textwidth}
\centering
		\begin{tabular}{lcr||lcr}
        G$_{\textrm{PCFG}}$ &  &  & G$_{\textrm{PRG}}$ &  &  \\
        \hline
         S$\rightarrow$ & a & $0.5$ & S$\rightarrow$ & a & $0.5$  \\
         S$\rightarrow$ & S,S & $0.5$ & S$\rightarrow$ & a,S & $0.5$\\
         \hline
        \end{tabular}
        
        \bigskip
\noindent\textbf{A}
\end{minipage}\hfill
\begin{minipage}[b]{.45\textwidth}
\centering
	\includegraphics[width=0.9\linewidth]{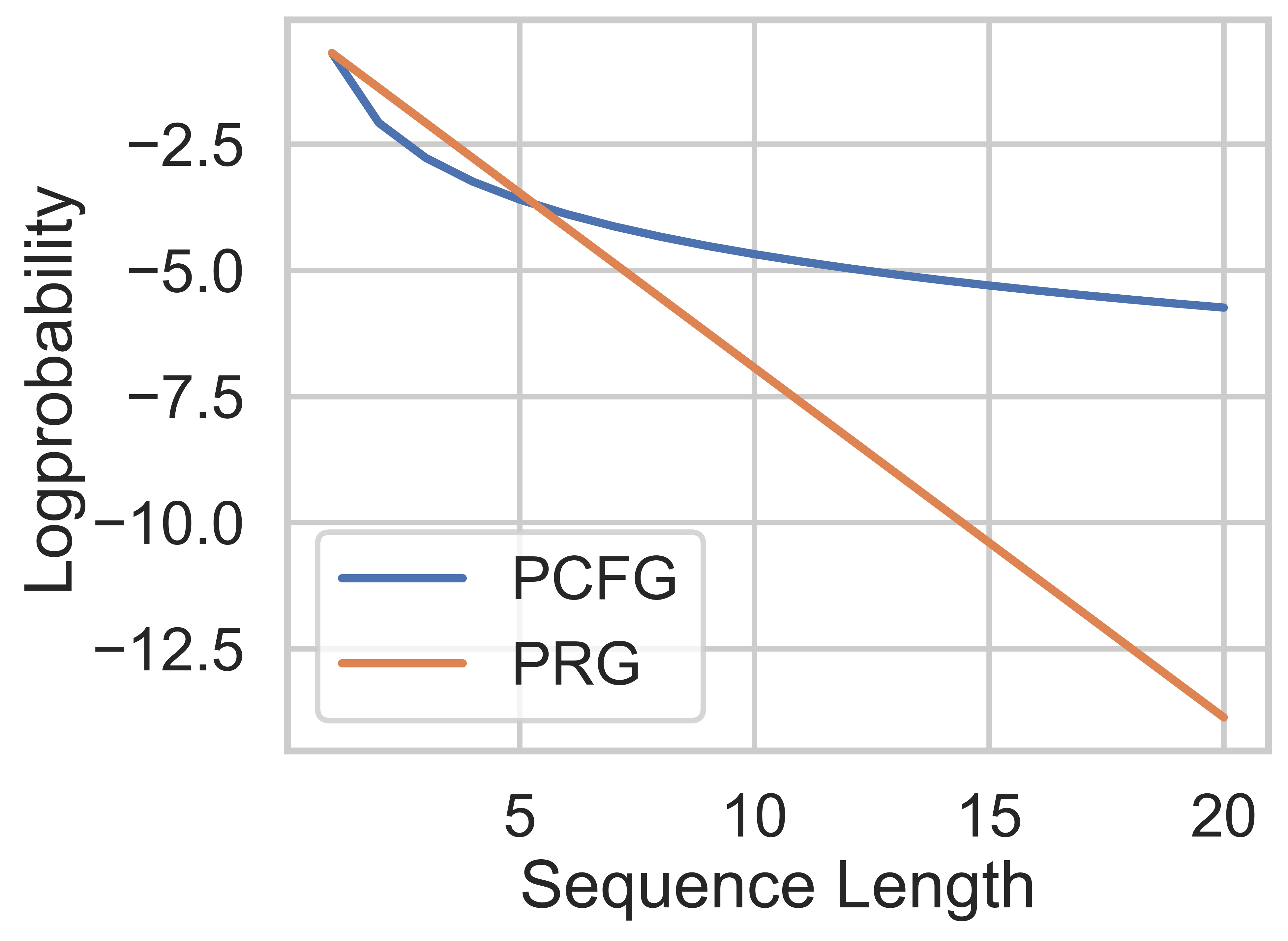} 
    
    \noindent\textbf{B}
	
\end{minipage}

\begin{equation*}
\begin{split}
  &p_{\textrm{pcfg}}(\{a_n\})=0.5^{2n-1}\cdot \frac{(2(n-1))!}{n!(n-1)!}\\
  &p_{\textrm{prg}}(\{a_n\})=0.5^n
\end{split}
\end{equation*}
        \bigskip
\noindent\textbf{C}
	
	\caption{\textbf{A distinct probability function is enforced by a fully connected PCFG.}
		This figure demonstrates how a fully connected PCFG models a distinct probability distribution from a probabilistic regular grammar. (\textbf{A}) defines a simple fully connected PCFG (G$_{\textrm{PCFG}}$) and a simple probabilistic regular grammar (G$_{\textrm{PRG}}$) with the associated probabilities of each rule.  (\textbf{B}) plots how the log-probability of a given sequence length $n$ decays for each grammar. (\textbf{C}) defines the probability functions of either grammar.}\label{eq:prob_func}
	\label{fig:pcfg_preg_dist} 
\end{figure}
    
\end{document}